%% file: main.tex
\documentclass[11pt]{article}
\usepackage{enumitem}
\usepackage{amsthm}
\usepackage{amssymb, amsmath, braket, xcolor}
\usepackage{geometry}
\usepackage[numbers,square,sort&compress]{natbib}
\usepackage{url}
\usepackage[hidelinks]{hyperref}
\usepackage{xspace}
\usepackage{iftex}
\usepackage{shellesc}

\newcommand{\Lean}{\textsc{Lean}~4\xspace}
\newcommand{\Mathlib}{\textsc{Mathlib}\xspace}
\ifPDFTeX
  \PackageError{Formalizing CHSH Rigidity in Lean 4}
    {Lean code blocks require XeLaTeX or LuaLaTeX}
\else
  \usepackage{fontspec}
  \defaultfontfeatures{Ligatures=TeX}
\fi

\usepackage[frozencache]{minted}
\definecolor{leanbg}{RGB}{255,255,255}
\setminted{
  fontsize=\footnotesize,
  breaklines,
  autogobble,
  tabsize=2,
  encoding=utf8
}
\newminted[leancode]{lean4}{
  bgcolor=leanbg,
  frame=lines,
  framesep=2mm,
  linenos,
  numbersep=6pt
}

\title{Formalizing CHSH Rigidity in Lean 4}
\author{
Tianrun Zhao\\
Stony Brook University
\and
Nengkun Yu\\
Stony Brook University
}

\newtheorem{theorem}{Theorem}[section]

\theoremstyle{definition}

\theoremstyle{definition}

\theoremstyle{remark}

\begin{document}
\maketitle

\begin{abstract}
Violation of the Clauser-Horne-Shimony-Holt (CHSH) inequality~\cite{clauser1969proposed} certifies genuine quantum correlations. In this work, we formalize in \Lean the rigidity theorem---any strategy achieving near-optimal CHSH value must be locally isometric to the canonical qubit strategy. In the course of formalization, we identified a gap in the argument of~\cite{McKague_2012}.
\end{abstract}

\input{section_02_introduction}

\input{section_08_Gap}
\input{section_07_case_study_i_rigidity_self_testing}

\input{section_04_proof}
\input{section_05_formalization}

\input{section_11_conclusion_and_future_work}

\section*{Acknowledgments}
We thank Valerio Scarani for his helpful suggestions.

\bibliographystyle{plainnat}
\bibliography{bibliography}
\end{document}

%% file: section_02_introduction.tex
\section{Introduction}

The Clauser-Horne-Shimony-Holt (CHSH) inequality is one of the central nonlocality statements in
quantum information theory~\cite{clauser1969proposed}. It gives a concrete, experimentally
accessible separation between classical and quantum correlations. In particular, CHSH violations underlie some of the best-known applications of
nonlocality, including quantum cryptography and self-testing~\cite{mayers2003self, PhysRevLett.67.661}.

In practice, we care about robustness.
For the CHSH game, the guiding principle is that if a bipartite strategy achieves a value close to
Tsirelson's bound \(2\sqrt2\), then, up to local isometries, the strategy must be close to the
canonical two-qubit realization consisting of an EPR pair together with the standard CHSH
observables, possibly tensored with an additional junk system. Making this statement fully precise
requires several nontrivial ingredients: spectral estimates for the ideal CHSH operator, exact and
approximate intertwining relations for extracted observables, careful regrouping of tensor factors,
and quantitative norm bounds. Many of these steps are standard in the literature, but they are
often presented tersely, which makes them a natural target for formal verification~\cite{de2015lean}.

Our work is carried out in \Lean, an interactive theorem prover in which mathematical objects,
definitions, and proofs are represented in a single formal language and checked by a small trusted
kernel~\cite{de2015lean}. This style of machine-checked mathematics is valuable not only because it
ensures the correctness of each theorem, but also because it forces every intermediate claim,
type conversion, and side condition to be made explicit. The surrounding \Mathlib ecosystem
provides a large reusable library of formalized mathematics~\cite{mathlib2020}, making it possible
to build substantial developments on top of shared infrastructure rather than formalizing each
technical component from scratch.

We formalize the above result modularly:
\[
    \underbrace{\text{CHSH near-optimality}}_{\text{bias assumption}}
    \;\longrightarrow\;
    \underbrace{\text{ideal expectation bound}}_{\text{extracted qubit analysis}}
    \;\longrightarrow\;
    \underbrace{\text{Bell-state overlap}}_{\text{state extraction}}
    \;\longrightarrow\;
    \underbrace{\text{operator extraction}}_{\text{local observable control}}.
\]

Their precise meaning of each step is clarified in the proof section.

By formalizing these stages as separate, composable components in \Lean, we obtain a
development in which the main rigidity theorem is assembled from reusable lemmas rather than from a
single monolithic proof. The present paper focuses on the robust CHSH rigidity theorem as the main
case study and uses it to illustrate both the proof architecture and the formalization techniques
needed to make each step precise.
This work can be viewed as a
formalization of a strengthened version of the 31st theorem (Tsirelson's bound) in ~\cite{david_palsberg_top100_quantum_theorems}.
Our \Lean formalization used in this work is available at \href{https://github.com/trzstony/RigidityTheorem}{\texttt{the project repository}}.

More broadly, we view formal verification as an important methodological tool for making results in
quantum information theory more reliable. Arguments in this area often involve long chains of
calculations, where a
tiny error can propagate and
ultimately undermine the final conclusion. Such issues can be hard to detect, because each individual step may look routine. A
proof assistant force every step to be stated explicitly. In this way, formal verification can reveal subtle mistakes that
might otherwise go unnoticed.

%% file: section_08_Gap.tex
\section{A Gap in the Original Literature}

The authors of~\cite{McKague_2012} define Bob-side operators as
\(X'_B := (B_0+B_1)/\lvert B_0+B_1\rvert\) and \(Z'_B := (B_0-B_1)/\lvert B_0-B_1\rvert\),
where \(\lvert\cdot \rvert\) is the modulus defined as \(\lvert M\rvert = \sqrt{M^2}\).
When \(B_0\pm B_1\) is not invertible (and therefore \((B_0\pm B_1)/\lvert B_0\pm B_1\rvert\) is not well defined), one must setup a convention.

The authors commented in the footnote: "If \(M\) has a subspace with eigenvalue \(0\), then the eigenvalue of \(M/\lvert M\rvert\) in that subspace is taken to be \(1\)." This is equivalent to saying "\(M/\lvert M\rvert\) acts as identity on \(\operatorname{ker}(M)\)."

Later, the authors said: "Moreover, \(\{X'_B,Z'_B\} = 0\) by construction, ...", which is incorrect. We show that in general \(\{X'_B,Z'_B\} \neq 0\) by constructing a counterexample below.

\paragraph{Concrete counterexample (in dimension 2).}
Let \(B_0=B_1=\sigma_z\).
Then \(B_0-B_1=0\), so under the ``+1 on the kernel'' convention, \[Z'_B:= (B_0-B_1)/\lvert B_0-B_1\rvert=I\] 

Since \(B_0 + B_1 = 2\sigma_z\), we have
\(
\lvert B_0+B_1\rvert=\sqrt{(2\sigma_z)^2}= 2I,
\)
so 
\[X'_B := (B_0+B_1)/\lvert B_0+B_1\rvert= \sigma_z.\]
Hence the anticommutator is
\(\{X'_B,Z'_B\}=2\sigma_z\ne 0\).

\paragraph{Comment}
Our \Lean implementation avoids relying on any kernel convention.
Instead, we extract Bob's qubit using a conjugation trick: we define a single-qubit rotation \(R\) that maps the ideal Pauli basis to the ideal CHSH basis (so \(RZR^{\!*}=H\) and \(RXR^{\!*}=H'\)), and we build Bob's unitary \(U_B\) by conjugating the same controlled-reflection circuit used on Alice's side.
This keeps the extracted operators unitary by construction.

The condition in the above counterexample is quite restrictive, and with the choice $B_0 = B_1 = \sigma_z$, the bias is at most 2. Therefore, we may assume $\epsilon$ is sufficiently small to fix this gap.

%% file: section_07_case_study_i_rigidity_self_testing.tex
\section{Rigidity and Self-Testing}

In this section, we formalize the robust CHSH rigidity theorem. Our proof follows Cleve's notes~\cite{cleve2019qic890} with some modifications.

\subsection{Bell states.}
On two qubits $\mathbb{C}^2$, we use the standard Bell basis
\[
\ket{\Phi^+} := \frac{\ket{00}+\ket{11}}{\sqrt{2}},\qquad
\ket{\Phi^-} := \frac{\ket{00}-\ket{11}}{\sqrt{2}},
\]
\[
\ket{\Psi^+} := \frac{\ket{01}+\ket{10}}{\sqrt{2}},\qquad
\ket{\Psi^-} := \frac{\ket{01}-\ket{10}}{\sqrt{2}}.
\]

These four states form an orthonormal basis of the two-qubit space.

\subsection{CHSH operators and strategies.}

\paragraph{Abstract CHSH strategy.}
A CHSH strategy consists of a bipartite state together with two binary observables for Alice and
two for Bob~\cite{cleve2004consequences}.  Concretely, let
\[
S=(\ket{\psi},A_0,A_1,B_0,B_1)
\]
be a CHSH strategy on finite-dimensional Hilbert spaces \(H_A \otimes H_B\), where
\(A_0,A_1\) and \(B_0,B_1\) are binary observables.

Its associated CHSH operator is
\[
\operatorname{CHSH}:=A_0\otimes B_0 + A_0\otimes B_1 + A_1\otimes B_0 - A_1\otimes B_1,
\]

\paragraph{Tsirelson's bound.}
Before turning to rigidity, recall the basic extremal fact~\cite{cirel1980quantum}: for any state \(\psi\) and binary
observables \(A_0,A_1,B_0,B_1\), the CHSH expectation is at most \(2\sqrt{2}\).  A very short
proof is to define
\[
P := \frac{A_0+A_1}{\sqrt{2}} \otimes I - I \otimes B_0,
\qquad
Q := \frac{A_1-A_0}{\sqrt{2}} \otimes I + I \otimes B_1.
\]
Then
\[
2\sqrt{2}\,I - \operatorname{CHSH}(A_0,A_1,B_0,B_1)
= \frac{1}{\sqrt{2}}\,(P^\dagger P + Q^\dagger Q)\ge 0,
\]
so taking expectation in \(\psi\) gives the bound. 

\Mathlib already contains this result in
\texttt{Mathlib/Algebra/Star/CHSH.lean} \cite{mathlib2020}.  Rigidity begins from the near-equality case of this
inequality.

\paragraph{Canonical two-qubit model.}
To identify the extremal strategy, we compare with the standard qubit observables.  Let \(Z\)
and \(X\) denote the Pauli matrices \(\sigma_z\) and \(\sigma_x\), and define
\[
H:=\frac{Z+X}{\sqrt2},\qquad H':=\frac{Z-X}{\sqrt2}.
\]

We then specialize to the canonical two-qubit choice
\[
A_0 := Z,\qquad A_1 := X,\qquad
B_0 := H,\qquad
B_1 := H'.
\]
The resulting two-qubit CHSH operator will be denoted by \(K\); explicitly,
\[
\begin{aligned}
K
:&=
Z\otimes H + Z\otimes H' + X\otimes H - X\otimes H'\\
&= \sqrt{2}\,(Z \otimes Z + X \otimes X).
\end{aligned}
\]

\paragraph{Spectral decomposition.}
The Bell basis diagonalizes this canonical operator.  In the basis
\(\{\ket{\Phi^+},\ket{\Phi^-},\ket{\Psi^+},\ket{\Psi^-}\}\), one has
\[
K
=
2\sqrt{2}\,\ket{\Phi^+}\!\bra{\Phi^+}
-2\sqrt{2}\,\ket{\Psi^-}\!\bra{\Psi^-},
\]
with \(\ket{\Phi^-}\) and \(\ket{\Psi^+}\) spanning the \(0\)-eigenspace. Thus
\(\ket{\Phi^+}\) is the unique eigenvector of eigenvalue \(2\sqrt{2}\), while
\(\ket{\Psi^-}\) is an eigenvector of eigenvalue \(-2\sqrt{2}\).

\subsection{Robust rigidity of CHSH}

We now consider the approximately extremal case of CHSH.  The point is no longer just that
\(2\sqrt{2}\) is the largest possible CHSH value, but that every strategy whose value is
close to \(2\sqrt{2}\) must already contain the standard EPR strategy, up to local isometries
and an additional irrelevant ``junk'' register.

We define the bias as
\[
\beta(S)
:=
\bra{\psi} \operatorname{CHSH}\ket{\psi}.
\]

\begin{theorem}[Robust CHSH rigidity~\cite{cleve2019qic890}]\label{thm:robust-chsh-rigidity}

Let $S = (|\psi\rangle, A_0, A_1, B_0, B_1)$ be an entangled strategy for the CHSH game.

Let \(\varepsilon\ge 0\) be sufficiently small, suppose the bias satisfies
$$
\beta(S) \ge 2\sqrt{2}-\varepsilon.
$$

Then there exist local isometries $V_A:H_A\to \mathbb{C}^2\otimes H_A$ and $V_B:H_B\to \mathbb{C}^2\otimes H_B$ and a state $|\Phi_{\text{junk}}\rangle\in H_A\otimes H_B$ such that:
\begin{enumerate}
\item \textbf{State extraction.}
$$
\|(V_A\otimes V_B)|\psi\rangle - |\Phi^+\rangle\otimes |\Phi_{\text{junk}}\rangle\|
\in O(\sqrt\varepsilon),
\qquad
|\Phi^+\rangle:=\frac{|00\rangle+|11\rangle}{\sqrt2}.
$$

\item \textbf{Operator extraction for Alice.}
$$
\|(V_A\otimes V_B)(A_0\otimes I)|\psi\rangle - (Z\otimes I)|\Phi^+\rangle\otimes |\Phi_{\text{junk}}\rangle\|
\in O(\sqrt\varepsilon),
$$
$$
\|(V_A\otimes V_B)(A_1\otimes I)|\psi\rangle - (X\otimes I)|\Phi^+\rangle\otimes |\Phi_{\text{junk}}\rangle\|
\in O(\sqrt\varepsilon).
$$

\item \textbf{Operator extraction for Bob.}
$$
\|(V_A\otimes V_B)(I\otimes B_0)|\psi\rangle - (I\otimes H)\,|\Phi^+\rangle\otimes |\Phi_{\text{junk}}\rangle\|
\in O(\sqrt\varepsilon),
$$
$$
\|(V_A\otimes V_B)(I\otimes B_1)|\psi\rangle - (I\otimes H')\,|\Phi^+\rangle\otimes |\Phi_{\text{junk}}\rangle\|
\in O(\sqrt\varepsilon).
$$
\end{enumerate}
\end{theorem}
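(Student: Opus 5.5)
The plan follows the modular chain from the introduction. The first step turns near-optimality into approximate algebraic relations. Using the sum-of-squares identity from the Tsirelson paragraph, $2\sqrt2 - \beta(S) = \frac{1}{\sqrt2}(\|P\psi\|^2+\|Q\psi\|^2)\le\varepsilon$. This gives
\[
\Bigl\|\tfrac{A_0+A_1}{\sqrt2}\otimes I\,\psi - I\otimes B_0\,\psi\Bigr\| \le 2^{1/4}\sqrt{\varepsilon},\qquad
\Bigl\|\tfrac{A_0-A_1}{\sqrt2}\otimes I\,\psi - I\otimes B_1\,\psi\Bigr\| \le 2^{1/4}\sqrt{\varepsilon}.
\]
Adding and subtracting these, and using that $B_0\pm B_1$ has norm at most $2$, I will derive $O(\sqrt\varepsilon)$ approximate relations on $\psi$. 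The first is the Bob-side relation $(I\otimes \tfrac{B_0+B_1}{\sqrt2})\psi \approx (A_0\otimes I)\psi$, and similarly $(I\otimes\tfrac{B_0-B_1}{\sqrt2})\psi\approx(A_1\otimes I)\psi$. The second is the key anticommutation estimate $\|(\{A_0,A_1\}\otimes I)\psi\| = O(\sqrt\varepsilon)$. To get it, I use $A_i^2=I$ to write $(A_0+A_1)^2/2 = I + \{A_0,A_1\}/2$, compare with $B_0^2=I$ through the relations above, and do the same for the difference.

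The second step avoids the gap of Section~2 by never normalizing $B_0\pm B_1$. Bob's side gets unitary observables by conjugation with the rotation $R$ ($RZR^*=H$, $RXR^*=H'$). Concretely, I set $\tilde Z_B := (B_0+B_1)/\sqrt2$ and $\tilde X_B := (B_0-B_1)/\sqrt2$ only as \emph{approximants} acting on $\psi$. The isometries use genuine binary observables: on Alice's side $A_0$ as ``$Z$'' and $A_1$ as ``$X$'', and on Bob's side the circuit is built from $B_0,B_1$ and then rotated by $R$. Each $V$ is the standard swap circuit $V = \frac12\sum$ of controlled-reflection terms: attach $\ket{+}$, apply the controlled-$Z$-like observable, Hadamard, then the controlled-$X$-like observable. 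This makes $V_A,V_B$ isometries by construction, since every ingredient is unitary.

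The third step is state extraction. I expand $(V_A\otimes V_B)\psi$ into four branches labeled by the ancilla bits. Using only the approximate relations (each substitution of an $A$-operator for a $B$-operator, or of anticommutation, costs $O(\sqrt\varepsilon)$ by the triangle inequality and operator norm $\le 2$), I will show it lies within $O(\sqrt\varepsilon)$ of $\ket{\Phi^+}\otimes\ket{\Phi_{\rm junk}}$ with $\ket{\Phi_{\rm junk}} := \sqrt2\,\frac{(I+A_0)}{2}\otimes I\,\psi$, or its normalization. An alternative, perhaps easier to formalize, fits the paper's route ``ideal expectation bound $\to$ Bell-state overlap''. It first shows that the extracted two-qubit reduced expectation of $K$ is $\ge 2\sqrt2-O(\varepsilon)$. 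The spectral decomposition of $K$ then forces overlap $1-O(\varepsilon)$ with $\ket{\Phi^+}\!\bra{\Phi^+}$, since the next eigenvalue is $0$. The bound on the overlap gives the $O(\sqrt\varepsilon)$ norm distance.

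The fourth step is operator extraction. For Alice, $V_A A_0 = (Z\otimes I)V_A$ holds up to the anticommutator term, and $V_A A_1\approx (X\otimes I)V_A$ likewise. Combined with step three, this yields the bullets with $O(\sqrt\varepsilon)$ error. For Bob, I use $(I\otimes B_0)\psi\approx\bigl(\tfrac{A_0+A_1}{\sqrt2}\otimes I\bigr)\psi$ to transfer to Alice's side, where the extraction is known. I then use that $(\frac{Z+X}{\sqrt2}\otimes I)\ket{\Phi^+} = (I\otimes H)\ket{\Phi^+}$, by the transpose trick for real symmetric matrices, and the same for $H'$. The main obstacle is the anticommutation estimate and its propagation through the Bob-side circuit, where $B_0,B_1$ enter non-normalized. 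The hypothesis ``$\varepsilon$ sufficiently small'' is what rules out degenerate cases like $B_0=B_1$. I would first try to ensure that every Bob operator applied to $\psi$ is replaced by an Alice operator before any normalization is needed.
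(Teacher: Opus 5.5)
\textbf{Gap in state extraction.} Your junk vector $\sqrt2\,\tfrac{I+A_0}{2}\otimes I\,\ket{\psi}$ is the junk of the \emph{unrotated} swap circuit, where Bob's controlled gates use observables that agree with $A_0,A_1$ on $\ket{\psi}$. It is not the junk of the isometry you actually build: controlled-$B_0$, Hadamard, controlled-$B_1$, then $R$ on the output qubit, i.e.\ the paper's $V_B=(R\otimes I)\,C_{B_1}(H\otimes I)C_{B_0}(H\otimes I)(\ket{0}\otimes I)$. The failure is visible already at $\varepsilon=0$. Take $\ket{\psi}=\ket{\Phi^+}$, $A_0=Z$, $A_1=X$, $B_0=H=RZR$, $B_1=H'=RXR$. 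Then $V_A\ket{x}=\ket{x}\otimes\ket{0}$ and $V_B\ket{y}=\ket{y}\otimes R\ket{0}$, so the extracted state is $\ket{\Phi^+}\otimes\ket{0}R\ket{0}$. Your formula gives junk $\ket{00}$, at distance $\sqrt{2-2\cos(\pi/8)}\approx0.39$. So the four-branch computation cannot close on that target: $R$ mixes Bob's branches, and the junk must carry $R\ket{0}$.

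Your fallback (the paper's route) does not give the claimed rate either. Comparing $(K\otimes I)\Psi$ with the image of $\mathrm{CHSH}\ket{\psi}$ term by term gives an $O(\sqrt{\varepsilon})$ \emph{vector} error (the paper's $4\sqrt{c\varepsilon}$). Cauchy--Schwarz turns this into an expectation deficit $\delta=\varepsilon+4\sqrt{c\varepsilon}=O(\sqrt{\varepsilon})$, not $O(\varepsilon)$. The spectral-gap step then yields only $\sqrt{\delta/\sqrt2}=O(\varepsilon^{1/4})$, which is all the paper's own sketch delivers. An $O(\varepsilon)$ deficit would need a second-order cancellation you have not supplied.

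\textbf{A repair, and how your Bob step differs.} You can avoid naming the junk. On the two extracted qubits, $P=\tfrac{I+H\otimes H}{2}\cdot\tfrac{I+H'\otimes H'}{2}$ is the projector onto $\ket{\Phi^+}$: the factors commute since $HH'=-H'H$, and $\ket{\Phi^+}$ is their unique common $+1$ eigenvector. Moreover $\|\Psi-P\Psi\|\le\tfrac12\|(I-H\otimes H)\Psi\|+\tfrac12\|(I-H'\otimes H')\Psi\|$. Both terms are $O(\sqrt{\varepsilon})$, using four ingredients: $(H\otimes I)V_B=V_BB_0$ exactly; $(H'\otimes I)V_B\approx V_BB_1$ with error $\tfrac{1}{\sqrt2}\|(I\otimes\{B_0,B_1\})\psi\|$, which requires Bob's anticommutator bound (obtained from the SOS relations exactly as Alice's); $(\tfrac{Z\pm X}{\sqrt2}\otimes I)V_A\approx V_A\tfrac{A_0\pm A_1}{\sqrt2}$; and your relations $(I\otimes B_0)\psi\approx(\tfrac{A_0+A_1}{\sqrt2}\otimes I)\psi$, $(I\otimes B_1)\psi\approx(\tfrac{A_0-A_1}{\sqrt2}\otimes I)\psi$. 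Taking the junk to be the normalized $\ket{\Phi^+}$-component, as the paper does, at most doubles the error.

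With that fix your step~4 goes through, and on Bob's side it genuinely differs from the paper. The paper uses Bob-side intertwiners ($H$ exact, $H'$ via $\{B_0,B_1\}$). You instead move $B_0,B_1$ to Alice using $\|(I\otimes B_0)\psi-(\tfrac{A_0+A_1}{\sqrt2}\otimes I)\psi\|\le2^{1/4}\sqrt{\varepsilon}$ and its $B_1$ analogue, reuse Alice's bounds, and finish with $(M\otimes I)\ket{\Phi^+}=(I\otimes M^{T})\ket{\Phi^+}$ for $M=H,H'$. This is valid and needs no Bob-side circuit analysis for the operator part, at the cost of somewhat worse constants. A minor point: $(Z\otimes I)V_A=V_AA_0$ holds exactly, not up to an anticommutator term.
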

The formalization proves explicit inequalities with concrete constants (omitted here for readability).

This is the robust self-testing statement for CHSH.  In the abstract CHSH framework, a strategy
is just a state together with four binary observables.

 The rigidity theorem specializes this
framework to the canonical target strategy
\[
A_0=Z,\qquad A_1=X,\qquad
B_0=H,\qquad B_1=H',
\]
and asserts that every near-optimal strategy must simulate this one approximately.

%% file: section_04_proof.tex
\section{Proof}

\begin{proof}[Proof Sketch]
We only sketch the proof formalized in our \Lean implementation. The main idea of the proof is still from Cleve's notes~\cite{cleve2019qic890}, but the formalization also makes one Bob-side modification explicit: instead of first working in the symmetrized CHSH basis on Bob's side and only absorbing the final rotation at the end, we build that rotation directly into Bob's extractor from the start.

\textbf{Setup: Constructing the local isometries \(V_A, V_B\).}

Following Cleve's construction on Alice's side, and with a modified Bob-side extractor in our formalization, we attach a qubit ancilla on each side and build extraction
unitaries from controlled versions of the observables together with the standard single-qubit gates
\(H,X,Z\). The resulting unitaries are
\[
U_A=C_{A_1}(H\otimes I)C_{A_0}(H\otimes I),
\qquad
U_B=(R\otimes I)\,C_{B_1}(H\otimes I)C_{B_0}(H\otimes I)\,(R^{\dagger}\otimes I),
\]
where \(C_A\) denotes the controlled version of \(A\), and
\[
R=\sin(\pi/8)\,X+\cos(\pi/8)\,Z,
\qquad
RZR^{\dagger}=H,
\qquad
RXR^{\dagger}=H',
\qquad
\ket{\mathrm{aux}}=R\ket0.
\]
The local isometries \(V_A : H_A \to \mathbb C^2\otimes H_A\) and
\(V_B : H_B \to \mathbb C^2\otimes H_B\) are then defined by
\[
V_A:=U_A(\ket0\otimes I_{H_A}),
\]
\[
V_B:=U_B(\ket{\mathrm{aux}}\otimes I_{H_B}).
\]
This Bob-side definition is one of our explicit modifications to the proof presentation. In Cleve's notes, the proof first produces the symmetrized canonical form on Bob's extracted qubit and only in the final step absorbs the local rotation \(R\) into Bob's isometry. Here we instead incorporate \(R\) into both \(U_B\) and the ancilla state \(\ket{\mathrm{aux}}=R\ket0\) from the outset. This lets the proof target the standard CHSH observables \((H,H')\) throughout, rather than switching conventions at the end.

It remains to construct the junk state \(\ket{\Phi_{\mathrm{junk}}}\) and show that these
isometries approximately extract the ideal Bell-pair structure.

From
\[
\beta:=\langle\psi|A_0\otimes B_0+A_0\otimes B_1+A_1\otimes B_0-A_1\otimes B_1|\psi\rangle
\ge 2\sqrt2-\varepsilon
\]
Let \(c:=128\sqrt2\). Then we can obtain
\[
\bra{\psi}(A_0A_1+A_1A_0)^2\otimes I \ket{\psi} \le c\,\varepsilon,
\qquad
\bra{\psi}I \otimes (B_0B_1+B_1B_0)^2\ket{\psi} \le c\,\varepsilon.
\]

\textbf{(1) State extraction.}
Set
\[
\ket{\Psi}:=\operatorname{regSwap}\bigl((V_A\otimes V_B)\ket{\psi}\bigr)
\in (\mathbb C^2\otimes\mathbb C^2)\otimes(H_A\otimes H_B),
\]
where \(\operatorname{regSwap}\) is the canonical linear isomorphism that regroups the tensor
factors.  This is needed because \((V_A\otimes V_B)\ket{\psi}\) naturally lies in
\((\mathbb C^2\otimes H_A)\otimes(\mathbb C^2\otimes H_B)\), whereas
\(\ket{\Phi^+}\otimes\ket{\Phi_{\mathrm{junk}}}\) lies in
\((\mathbb C^2\otimes\mathbb C^2)\otimes(H_A\otimes H_B)\).
This regrouping map is left implicit in Cleve's notes, but in our formalization it must be made
explicit: one of the small but genuine contributions of the development is to isolate such tensor
reassociation steps as concrete maps, since a proof assistant requires every change of ambient space
to be represented rigorously.

Let
\[
K=(Z\otimes H)+(Z\otimes H')+(X\otimes H)-(X\otimes H').
\]
This is exactly the canonical two-qubit CHSH operator introduced above, now acting on the
extracted qubit registers.
\[
\operatorname{CHSH}_{\mathrm{phys}}
:=\operatorname{regSwap}\bigl((V_A\otimes V_B)\operatorname{CHSH}\ket{\psi}\bigr),
\qquad
\operatorname{CHSH}_{\mathrm{ideal}}:=(K\otimes I)\ket{\Psi}.
\]
Term-by-term, one has
\[
\|\operatorname{CHSH}_{\mathrm{ideal}}-\operatorname{CHSH}_{\mathrm{phys}}\|
\le 4\sqrt{c\varepsilon},
\]

Let
\(
\delta:=\varepsilon+4\sqrt{c\varepsilon}\).
Since \(\ket{\Psi}\) is a unit vector, the preceding norm bound implies
\[
\Re\langle \Psi | \operatorname{CHSH}_{\mathrm{ideal}} \rangle
\ge
\Re\langle \Psi | \operatorname{CHSH}_{\mathrm{phys}} \rangle - 4\sqrt{c\varepsilon}.
\]
Using \(\operatorname{CHSH}_{\mathrm{ideal}}=(K\otimes I)\ket{\Psi}\),
\(\Re\langle \Psi | \operatorname{CHSH}_{\mathrm{phys}} \rangle=\beta\), and
\(\beta\ge 2\sqrt2-\varepsilon\), we obtain
\[
\Re\langle \Psi | (K\otimes I) | \Psi\rangle \ge 2\sqrt2-\delta.
\]
Using the spectral decomposition of \(K\) in the Bell basis, the preceding inequality implies that
the projection of \(\ket{\Psi}\) onto the top eigenspace of \(K\), namely the
\(\ket{\Phi^+}\)-eigenspace, has squared norm at least \(1-\delta/(2\sqrt2)\). Let
\(\ket{\Phi_{\mathrm{junk}}}\) be the normalized component of \(\ket{\Psi}\) in that eigenspace.
Then the extracted state is close to the ideal Bell pair tensored with junk:
\[
\|\ket{\Psi}-\ket{\Phi^+}\otimes\ket{\Phi_{\mathrm{junk}}}\|
\le \sqrt{\delta/\sqrt2}.
\]

\textbf{(2) Operator extraction for Alice.}
For \(A_0\), the intertwining relation is exact, while for \(A_1\) we use the approximate
extraction bound coming from the anticommutator estimate.  Applying these identities to
\(\ket{\psi}\), and then transporting them through \(\operatorname{regSwap}\), gives
\[
(Z\otimes I)V_A = V_AA_0,
\qquad
\Bigl\| \bigl((((X\otimes I)V_A - V_AA_1)\otimes I)\ket{\psi}\bigr) \Bigr\|
\le \sqrt{c\varepsilon}.
\]
Combining these relations with the state-extraction estimate and using that the ideal observables
act isometrically on the extracted qubit registers, we obtain the two required Alice-side bounds:
\[
\Bigl\|\operatorname{regSwap}\bigl((V_A\otimes V_B)(A_0\otimes I)\ket{\psi}\bigr)
-((Z\otimes I)\otimes I)\bigl(\ket{\Phi^+}\otimes\ket{\Phi_{\mathrm{junk}}}\bigr)\Bigr\|
\le \sqrt{\delta/\sqrt2},
\]
\[
\Bigl\|\operatorname{regSwap}\bigl((V_A\otimes V_B)(A_1\otimes I)\ket{\psi}\bigr)
-((X\otimes I)\otimes I)\bigl(\ket{\Phi^+}\otimes\ket{\Phi_{\mathrm{junk}}}\bigr)\Bigr\|
\le \sqrt{c\varepsilon}+\sqrt{\delta/\sqrt2}.
\]

\textbf{(3) Operator extraction for Bob.}
The argument for Bob is identical in structure, but here the Bob-side modification above becomes mathematically visible. Because we built the rotation \(R\) and the rotated ancilla \(\ket{\mathrm{aux}}\) into the definition of \(V_B\), the extracted \(B_0\) relation already matches the standard observable \(H\) exactly at this stage, while \(H'\) is extracted approximately.
\[
\!(H\otimes I)V_B = V_BB_0,
\qquad
\Bigl\| \bigl((I\otimes ((H'\otimes I)V_B - V_BB_1))\ket{\psi}\bigr) \Bigr\|
\le \sqrt{c\varepsilon}.
\]
This is precisely where our presentation differs from Cleve's notes: there, the analogous exact intertwining is first obtained in the symmetrized basis, and the \(R\)-rotation is absorbed only in the final cleanup step. In our proof sketch and Lean implementation, the stronger-looking statement
\((H\otimes I)V_B = V_BB_0\)
already holds before that final assembly step because the extractor was designed in the rotated convention from the beginning.
Combining these with the same state-extraction estimate yields the Bob-side bounds:
\[
\Bigl\|\operatorname{regSwap}\bigl((V_A\otimes V_B)(I\otimes B_0)\ket{\psi}\bigr)
-((I\otimes H)\otimes I)\bigl(\ket{\Phi^+}\otimes\ket{\Phi_{\mathrm{junk}}}\bigr)\Bigr\|
\le \sqrt{\delta/\sqrt2},
\]
\[
\Bigl\|\operatorname{regSwap}\bigl((V_A\otimes V_B)(I\otimes B_1)\ket{\psi}\bigr)
-((I\otimes H')\otimes I)\bigl(\ket{\Phi^+}\otimes\ket{\Phi_{\mathrm{junk}}}\bigr)\Bigr\|
\le \sqrt{c\varepsilon}+\sqrt{\delta/\sqrt2}.
\]
Since \(\operatorname{regSwap}\) is an isometry, these are equivalent to the four bounds stated in
Theorem~\ref{thm:robust-chsh-rigidity}.
\end{proof}

%% file: section_05_formalization.tex
\section{Formalization of robust CHSH rigidity}

\paragraph{Representative \Lean declarations.}
The following declarations from \texttt{RigidityTheorem/Approximate\_Rigidity/} capture the main
formal ingredients behind the three parts of the proof architecture.

\textbf{Strategy and observables.}
The basic CHSH data are packaged by the structure \texttt{CHSHStrategy}, whose fields are the
shared state \(\psi\), the four observables \(A_0,A_1,B_0,B_1\), and the normalization condition
\(\|\psi\|=1\). The class \texttt{IsBinaryObservable} formalizes the assumption that each
observable is a self-adjoint involution, and the definition \texttt{chshBias} is the real part of
the expectation of the abstract CHSH operator \texttt{CHSH\_op} on this strategy.

\begin{leancode}
/-! ## CHSH strategy scaffolding -/

structure CHSHStrategy (H_A H_B : Type*)
    [NormedAddCommGroup H_A] [NormedAddCommGroup H_B]
    [InnerProductSpace ℂ H_A] [InnerProductSpace ℂ H_B] where
  psi : H_A ⊗[ℂ] H_B
  psi_norm : ‖psi‖ = 1
  A0 : H_A →ₗ[ℂ] H_A
  A1 : H_A →ₗ[ℂ] H_A
  B0 : H_B →ₗ[ℂ] H_B
  B1 : H_B →ₗ[ℂ] H_B
  A0_bin : IsBinaryObservable A0
  A1_bin : IsBinaryObservable A1
  B0_bin : IsBinaryObservable B0
  B1_bin : IsBinaryObservable B1
\end{leancode}

\begin{leancode}
 /-- General CHSH operator. -/
def CHSH_op : H_A ⊗[ℂ] H_B →ₗ[ℂ] H_A ⊗[ℂ] H_B :=
  (A0 ⊗ₗ B0) + (A0 ⊗ₗ B1) + (A1 ⊗ₗ B0) - (A1 ⊗ₗ B1)
\end{leancode}
  
\begin{leancode} 
/-- Binary observable = self-adjoint involution. -/
class IsBinaryObservable {H : Type*} [NormedAddCommGroup H] [InnerProductSpace ℂ H]
    (A : H →ₗ[ℂ] H) : Prop where
  symm : A.IsSymmetric
  sq_one : A ∘ₗ A = LinearMap.id
\end{leancode}

\textbf{Controlled gates and ancilla embeddings.}
In the isometry construction, the definition \texttt{control} is exactly the controlled gate
\((|0\rangle\langle 0| \otimes I)+(|1\rangle\langle 1| \otimes A)\) on
\(\mathbb C^2 \otimes H\). The map \texttt{embed aux} is the ancilla insertion
\(\ket{\mathrm{aux}}\otimes I\), and \texttt{auxState} formalizes Bob's rotated ancilla
\(\ket{\mathrm{aux}}=R\ket0\).

\begin{leancode}
noncomputable def auxState : Qubit := Rotation ket0

-- Embedding `S_aux = |aux⟩ ⊗ I` (and in particular `S = |0⟩ ⊗ I` for Alice).
noncomputable def embed (aux : Qubit) : H →ₗ[ℂ] (Qubit ⊗[ℂ] H) :=
  (TensorProduct.mk ℂ Qubit H) aux
\end{leancode}
\newpage
\begin{leancode}
noncomputable def control (A : H →ₗ[ℂ] H) :
    (Qubit ⊗[ℂ] H) →ₗ[ℂ] (Qubit ⊗[ℂ] H) :=
  -- Projectors `|0⟩⟨0|` and `|1⟩⟨1|` on the control qubit.
  -- Controlled gate: `(|0⟩⟨0| ⊗ I) + (|1⟩⟨1| ⊗ A)`.
  (proj0 ⊗ₗ (LinearMap.id : H →ₗ[ℂ] H)) + (proj1 ⊗ₗ A)
\end{leancode}

\textbf{The maps \texttt{unitaryUA}, \texttt{unitaryUB}, \texttt{VA}, and \texttt{VB}.}
The definition \texttt{unitaryUA} is the \Lean version of the canonical circuit
\[
U_A=C_{A_1}(H\otimes I)C_{A_0}(H\otimes I),
\]
built from two controlled gates and two Hadamards. The isometry \texttt{VA} is then obtained by
composing \texttt{unitaryUA} with the embedding \(\ket0\otimes I\), so it is the formal map
\(V_A=U_A(\ket0\otimes I)\). For Bob, the definition \texttt{unitaryUB} implements the rotated
construction
\[
U_B=(R\otimes I)\,U_A(B_0,B_1)\,(R^\dagger\otimes I),
\]
and \texttt{VB} is the corresponding embedded map \(V_B=U_B(\ket{\mathrm{aux}}\otimes I)\).

\begin{leancode}
/- Canonical form unitaries built from Alice/Bob observables.-/
noncomputable def unitaryUA (A0 A1 : H →ₗ[ℂ] H) :
    (Qubit ⊗[ℂ] H) →ₗ[ℂ] (Qubit ⊗[ℂ] H) :=
  (control A1) ∘ₗ (Hadamard ⊗ₗ LinearMap.id) ∘ₗ (control A0) ∘ₗ (Hadamard ⊗ₗ LinearMap.id)

noncomputable def VA (A0 A1 : H →ₗ[ℂ] H) :
    H →ₗ[ℂ] (Qubit ⊗[ℂ] H) :=
  (unitaryUA A0 A1).comp (embed ket0)
\end{leancode}

\begin{leancode}
/-!
Bob's unitary in canonical form (rewritten to match the convention
`(B0,B1) = (H,H')` in the notes):

`U_B := (R ⊗ I) · U_A(B0, B1) · (R† ⊗ I)` where `R = sin(π/8) X + cos(π/8) Z`.

This choice ensures that, with the embedding `S_aux = |aux⟩ ⊗ I` where `|aux⟩ = R|0⟩`,
the extracted action on `B0` is exact.
-/
noncomputable def unitaryUB (B0 B1 : H →ₗ[ℂ] H) :
    (Qubit ⊗[ℂ] H) →ₗ[ℂ] (Qubit ⊗[ℂ] H) :=
  (Rotation ⊗ₗ (LinearMap.id : H →ₗ[ℂ] H)) ∘ₗ
    (unitaryUA B0 B1) ∘ₗ
    (Rotation.adjoint ⊗ₗ (LinearMap.id : H →ₗ[ℂ] H))

noncomputable def VB (B0 B1 : H →ₗ[ℂ] H) :
    H →ₗ[ℂ] (Qubit ⊗[ℂ] H) :=
  (unitaryUB B0 B1).comp (embed auxState)
\end{leancode}

\textbf{Tensor regrouping.}
The definition \texttt{regSwap} implements the canonical linear isomorphism between
\(((\mathbb C^2 \otimes H_A) \otimes (\mathbb C^2 \otimes H_B))\) and
\(((\mathbb C^2 \otimes \mathbb C^2) \otimes (H_A \otimes H_B))\).  It is the formal device that
allows the extracted physical state to be compared directly with
\(\ket{\Phi^+}\otimes\ket{\Phi_{\mathrm{junk}}}\).  Concretely, the code below builds this map by
two reassociations together with a swap of the middle registers.

\begin{leancode}
noncomputable def regSwap :
    ((Qubit ⊗[ℂ] H_A) ⊗[ℂ] (Qubit ⊗[ℂ] H_B)) →ₗ[ℂ]
      ((Qubit ⊗[ℂ] Qubit) ⊗[ℂ] (H_A ⊗[ℂ] H_B)) :=
by
  -- First reassociate so we can access the middle factors.
  let assoc₁ : ((Qubit ⊗[ℂ] H_A) ⊗[ℂ] (Qubit ⊗[ℂ] H_B)) →ₗ[ℂ]
      (Qubit ⊗[ℂ] (H_A ⊗[ℂ] (Qubit ⊗[ℂ] H_B))) :=
    (TensorProduct.assoc ℂ Qubit H_A (Qubit ⊗[ℂ] H_B)).toLinearMap
  -- Rearrange `H_A ⊗ (Qubit ⊗ H_B)` into `Qubit ⊗ (H_A ⊗ H_B)`.
  let inner : (H_A ⊗[ℂ] (Qubit ⊗[ℂ] H_B)) →ₗ[ℂ]
      (Qubit ⊗[ℂ] (H_A ⊗[ℂ] H_B)) :=
    (TensorProduct.assoc ℂ Qubit H_A H_B).toLinearMap.comp
      ((TensorProduct.map (TensorProduct.comm ℂ H_A Qubit).toLinearMap
          (LinearMap.id : H_B →ₗ[ℂ] H_B)).comp
        (TensorProduct.assoc ℂ H_A Qubit H_B).symm.toLinearMap)
  -- Tensor the above rearrangement with `id` on the leading `Qubit`.
  let mapMid : (Qubit ⊗[ℂ] (H_A ⊗[ℂ] (Qubit ⊗[ℂ] H_B))) →ₗ[ℂ]
      (Qubit ⊗[ℂ] (Qubit ⊗[ℂ] (H_A ⊗[ℂ] H_B))) :=
    TensorProduct.map (LinearMap.id : Qubit →ₗ[ℂ] Qubit) inner
  -- Finally reassociate to get `(Qubit ⊗ Qubit) ⊗ (H_A ⊗ H_B)`.
  let assoc₂ : (Qubit ⊗[ℂ] (Qubit ⊗[ℂ] (H_A ⊗[ℂ] H_B))) →ₗ[ℂ]
      ((Qubit ⊗[ℂ] Qubit) ⊗[ℂ] (H_A ⊗[ℂ] H_B)) :=
    (TensorProduct.assoc ℂ Qubit Qubit (H_A ⊗[ℂ] H_B)).symm.toLinearMap
  exact assoc₂.comp (mapMid.comp assoc₁)
\end{leancode}

\textbf{State extraction bridge.}
The theorem \texttt{chsh\_to\_K\_expectation} is the formal bridge from the physical CHSH bias to
the extracted state: it introduces the local isometries \(V_A,V_B\), defines
\(\Psi=\operatorname{regSwap}((V_A\otimes V_B)\psi)\), and proves that
\(\Re\langle \Psi,(K\otimes I)\Psi\rangle\) is at least \(2\sqrt2-\delta(\varepsilon)\).
This is the exact expectation estimate used before the Bell-basis analysis in the proof sketch.

\begin{leancode}
theorem chsh_to_K_expectation
    (hBias : chshBias S ≥ 2 * Real.sqrt 2 - ε) :
    let V_A : H_A →ₗ[ℂ] (Qubit ⊗[ℂ] H_A) := VA (H := H_A) S.A0 S.A1
    let V_B : H_B →ₗ[ℂ] (Qubit ⊗[ℂ] H_B) := VB (H := H_B) S.B0 S.B1
    let Ψ : (Qubit ⊗[ℂ] Qubit) ⊗[ℂ] (H_A ⊗[ℂ] H_B) := regSwap ((V_A ⊗ₗ V_B) S.psi)
    Complex.re
        (⟪Ψ, ((K ⊗ₗ (LinearMap.id : (H_A ⊗[ℂ] H_B) →ₗ[ℂ] (H_A ⊗[ℂ] H_B))) Ψ)⟫_ℂ)
      ≥ 2 * Real.sqrt 2 - delta ε := by ...
\end{leancode}

\textbf{Operator extraction for Alice.}
The theorem \texttt{Alice\_operator\_extraction} is the packaged Alice-side conclusion: it asserts
the existence of \(V_A,V_B\) and a junk state such that the extracted action of \(A_0\) is close
to \(Z\ket{\Phi^+}\) with error \(\sqrt{\delta(\varepsilon)/\sqrt2}\), while the extracted action
of \(A_1\) is close to \(X\ket{\Phi^+}\) with the larger error
\(\sqrt{c\varepsilon}+\sqrt{\delta(\varepsilon)/\sqrt2}\).

\begin{leancode}
theorem Alice_operator_extraction
    (hBias : chshBias S ≥ 2 * Real.sqrt 2 - ε) :
    ∃ V_A : H_A →ₗ[ℂ] (Qubit ⊗[ℂ] H_A),
      ∃ V_B : H_B →ₗ[ℂ] (Qubit ⊗[ℂ] H_B),
      ∃ junk : H_A ⊗[ℂ] H_B,
        Isometry V_A ∧
          Isometry V_B ∧
            ‖(regSwap ((V_A ⊗ₗ V_B) ((applyAlice S.A0) S.psi))
              - ((pauliZ ⊗ₗ LinearMap.id) bellState) ⊗ₜ[ℂ] junk)‖
              ≤ Real.sqrt (delta ε / Real.sqrt 2) ∧
            ‖(regSwap ((V_A ⊗ₗ V_B) ((applyAlice S.A1) S.psi))
              - ((pauliX ⊗ₗ LinearMap.id) bellState) ⊗ₜ[ℂ] junk)‖
              ≤ Real.sqrt (cConst * ε) + Real.sqrt (delta ε / Real.sqrt 2) := by ...	\end{leancode}

\textbf{Operator extraction for Bob.}
Similarly, \texttt{Bob\_operator\_extraction} is the packaged Bob-side conclusion: it states that
the extracted action of \(B_0\) is close to \(H\ket{\Phi^+}\) with error
\(\sqrt{\delta(\varepsilon)/\sqrt2}\), and the extracted action of \(B_1\) is close to
\(H'\ket{\Phi^+}\)---written in \Lean as \texttt{pauliZHZ}---with error
\(\sqrt{c\varepsilon}+\sqrt{\delta(\varepsilon)/\sqrt2}\).

\begin{leancode}
theorem Bob_operator_extraction
    (hBias : chshBias S ≥ 2 * Real.sqrt 2 - ε) :
    ∃ V_A : H_A →ₗ[ℂ] (Qubit ⊗[ℂ] H_A),
      ∃ V_B : H_B →ₗ[ℂ] (Qubit ⊗[ℂ] H_B),
      ∃ junk : H_A ⊗[ℂ] H_B,
        Isometry V_A ∧
          Isometry V_B ∧
            ‖(regSwap ((V_A ⊗ₗ V_B) ((applyBob S.B0) S.psi))
              - ((LinearMap.id ⊗ₗ Hadamard) bellState) ⊗ₜ[ℂ] junk)‖
              ≤ Real.sqrt (delta ε / Real.sqrt 2) ∧
            ‖(regSwap ((V_A ⊗ₗ V_B) ((applyBob S.B1) S.psi))
              - ((LinearMap.id ⊗ₗ pauliZHZ) bellState) ⊗ₜ[ℂ] junk)‖
              ≤ Real.sqrt (cConst * ε) + Real.sqrt (delta ε / Real.sqrt 2) := by ...	\end{leancode}

%% file: section_11_conclusion_and_future_work.tex
\section{Discussion}

\subsection{AI assistance in the development process}
We used LLM-based coding assistants to accelerate formalization, in the same general direction as
recent work on agentic autoformalization in quantum computation~\cite{ren2026merlean}. Sometimes,
the model would quietly add hypotheses to a lemma, and the proof became trivial. We choose to
state each lemma in \Lean syntax before attempting a proof, and then closing \texttt{sorry}s from
the leaves upward. LLM assistance proved most effective once the relevant algebraic structure was
already made explicit in the theorem statement.
ChatGPT (OpenAI) was also used as a writing assistant in the preparation of this manuscript.

\subsection{Library ecosystem and integration.}
Our development proceeded in parallel with the quantum-information library of~\cite{meiburg2025formalizationgeneralizedquantumsteins}. Several of its design choices were not well suited to the proof patterns we required, so rather than building on top of it we developed our own library needed for our case studies.

\subsection{Formalization as error discovery.}
Our experience is also similar in spirit to recent work of Tooby-Smith, who formalized the stability of the two Higgs doublet model potential in \Lean and identified an error in the literature~\cite{toobysmith2026formalizingstabilityhiggsdoublet}. Both projects illustrate that formalization can do more than certify accepted arguments: it can also expose subtle gaps in published proofs.

%% file: bibliography.bib
@article{McKague_2012,
doi = {10.1088/1751-8113/45/45/455304},
url = {https://doi.org/10.1088/1751-8113/45/45/455304},
year = {2012},
month = {oct},
publisher = {IOP Publishing},
volume = {45},
number = {45},
pages = {455304},
author = {McKague, M and Yang, T H and Scarani, V},
title = {Robust self-testing of the singlet},
journal = {Journal of Physics A: Mathematical and Theoretical}
}

@misc{cleve2019qic890,
  author = {Richard Cleve},
  title = {QIC 890 Entanglement and Nonlocal Effects},
  year = {2019},
  month = apr,
  note = {Lecture notes for QIC 890, University of Waterloo},
  url = {https://cleve.iqc.uwaterloo.ca/resources/Qic890LectureNotes2019Apr22(V22).pdf}
}

@misc{david_palsberg_top100_quantum_theorems,
  author       = {Marco David and Jens Palsberg},
  title        = {Top 100 Quantum Theorems},
  year         = {2026},
  url          = {https://marcodavid.net/top100/},
  note         = {Accessed 2026-03-26}
}

@misc{toobysmith2026formalizingstabilityhiggsdoublet,
      title={Formalizing the stability of the two Higgs doublet model potential into Lean: identifying an error in the literature}, 
      author={Joseph Tooby-Smith},
      year={2026},
      eprint={2603.08139},
      archivePrefix={arXiv},
      primaryClass={hep-ph},
      url={https://arxiv.org/abs/2603.08139}, 
}

@misc{meiburg2025formalizationgeneralizedquantumsteins,
      title={A Formalization of the Generalized Quantum Stein's Lemma in Lean}, 
      author={Alex Meiburg and Leonardo A. Lessa and Rodolfo R. Soldati},
      year={2025},
      eprint={2510.08672},
      archivePrefix={arXiv},
      primaryClass={quant-ph},
      url={https://arxiv.org/abs/2510.08672}, 
}

@inproceedings{mathlib2020,
  author    = {{The mathlib Community}},
  title     = {The {L}ean {M}athematical {L}ibrary},
  booktitle = {Proceedings of the 9th {ACM} {SIGPLAN} International Conference
               on Certified Programs and Proofs},
  series    = {CPP 2020},
  publisher = {ACM},
  address   = {New Orleans, LA, USA},
  year      = {2020},
  month     = jan,
  doi       = {10.1145/3372885.3373824},
  url       = {https://doi.org/10.1145/3372885.3373824}
}

@article{cirel1980quantum,
  title={Quantum generalizations of Bell's inequality},
  author={Cirel'son, Boris S},
  journal={Letters in Mathematical Physics},
  volume={4},
  number={2},
  pages={93--100},
  year={1980},
  publisher={Springer}
}

@article{PhysRevLett.67.661,
  title = {Quantum cryptography based on Bell's theorem},
  author = {Ekert, Artur K.},
  journal = {Phys. Rev. Lett.},
  volume = {67},
  issue = {6},
  pages = {661--663},
  numpages = {0},
  year = {1991},
  month = {Aug},
  publisher = {American Physical Society},
  doi = {10.1103/PhysRevLett.67.661},
  url = {https://link.aps.org/doi/10.1103/PhysRevLett.67.661}
}

@article{mayers2003self,
  title={Self testing quantum apparatus},
  author={Mayers, Dominic and Yao, Andrew},
  journal={arXiv preprint quant-ph/0307205},
  year={2003}
}

@article{clauser1969proposed,
  title={Proposed experiment to test local hidden-variable theories},
  author={Clauser, John F and Horne, Michael A and Shimony, Abner and Holt, Richard A},
  journal={Physical review letters},
  volume={23},
  number={15},
  pages={880},
  year={1969},
  publisher={APS}
}

@inproceedings{cleve2004consequences,
  title={Consequences and limits of nonlocal strategies},
  author={Cleve, Richard and Hoyer, Peter and Toner, Benjamin and Watrous, John},
  booktitle={Proceedings. 19th IEEE Annual Conference on Computational Complexity, 2004.},
  pages={236--249},
  year={2004},
  organization={IEEE}
}

@inproceedings{de2015lean,
  title={The Lean theorem prover (system description)},
  author={De Moura, Leonardo and Kong, Soonho and Avigad, Jeremy and Van Doorn, Floris and von Raumer, Jakob},
  booktitle={International Conference on Automated Deduction},
  pages={378--388},
  year={2015},
  organization={Springer}
}

@article{ren2026merlean,
  title         = {MerLean: An Agentic Framework for Autoformalization
                   in Quantum Computation},
  author        = {Ren, Yuanjie and Li, Jinzheng and Qi, Yidi},
  journal       = {arXiv preprint arXiv:2602.16554},
  year          = {2026},
  eprint        = {2602.16554},
  archivePrefix = {arXiv},
  primaryClass  = {cs.LO},
  doi           = {10.48550/arXiv.2602.16554}
}
